\newtheorem{definition}{Definition}
\newtheorem{lemma}{Lemma}
\newtheorem{theorem}{Theorem}
\newcommand{\norm}[1]{\left|\left|#1\right|\right|}
\title{Improved Bounds on RIP for Generalized Orthogonal Matching Pursuit} 
\author{{Siddhartha Satpathi$^1$, Rajib Lochan Das$^2$ and Mrityunjoy
Chakraborty $^3$ }\\
Department of Electronics and Electrical Communication
Engineering\\
Indian Institute of Technology, Kharagpur, INDIA\\
E.Mail :$^1$sidd.piku@gmail.com, $^2$rajib.das.iit@gmail.com,
 $^3$mrityun@ece.iitkgp.ernet.in}
\begin{document}

\maketitle

\begin{abstract} Generalized Orthogonal Matching Pursuit (gOMP) is a natural extension of OMP algorithm where unlike OMP, it may select $N (\geq1)$ atoms
in each iteration. In this paper, we demonstrate that gOMP can successfully reconstruct a $K$-sparse signal from a compressed measurement 
$ {\bf y}={\bf \Phi x}$ by $K^{th}$ iteration if
the sensing matrix ${\bf \Phi}$ satisfies restricted isometry property (RIP) of order $NK$ where $\delta_{NK} < \frac {\sqrt{N}}{\sqrt{K}+2\sqrt{N}}$.
Our bound offers an improvement over the very recent result shown in \cite{wang_2012b}. Moreover, we present another bound for gOMP of order $NK+1$ with
 $\delta_{NK+1} < \frac {\sqrt{N}}{\sqrt{K}+\sqrt{N}}$ which exactly relates to the near optimal bound of $\delta_{K+1} < \frac {1}{\sqrt{K}+1}$ for OMP ($N=1$)
 as shown in \cite{wang_2012a}.

%
\end{abstract}

\section{Introduction} 
Compressed sensing or compressive sampling
(CS)\cite{Donoho,Candes,Baraniuk} is a powerful technique to
represent signals at a sub-Nyquist sampling rate while retaining
the capacity of perfect (or near perfect) reconstruction of the
signal, provided the signal is known to be sparse in some domain.
In last few years, the CS technique has attracted considerable
attention from across a wide array of fields like applied
mathematics, statistics, and engineering, including signal
processing areas such as MR imaging, speech processing, analog to
digital conversion etc. Let a real valued, band-limited signal be sampled following Nyquist
sampling rate over a finite observation interval,
generating a $n\times 1$ signal vector ${\bf u}=( u_1,u_2
,\cdots ,u_n)^T$. The vector ${\bf u}$ is known to be $K$-sparse under
some transform domain
\begin{equation}
 {\bf u}={\bf \Psi}{\bf x} \nonumber
\end{equation}
where ${\bf \Psi}$ is $n\times n$ transform matrix and ${\bf x}$ is the corresponding $n$ dimensional transform coefficient vector that is
 approximated with at most $K$ non-zero entries. Suppose that the signal ${\bf u}$ is
converted  to a lower dimension ($m$) via linear random projection
\begin{equation}
 {\bf y}={\bf A}{\bf u} \nonumber
\end{equation}
where ${\bf y}$ is observation vector with $m\ll n$ and ${\bf A}$ is a flat $m\times n$ random matrix.
According to the CS theory, it is then possible to reconstruct the signal ${\bf u}$ exactly from a very limited number of 
 measurements $M=\mathcal{O}(K \text{log}_e n )$. Therefore, CS framework results in a potential challenge in reconstructing a $K$-sparse signal from
a under determined system equation
\begin{equation}
 {\bf y}={\bf \Phi}{\bf x} \nonumber
\end{equation}
where ${\bf \Phi}={\bf A}{\bf \Psi}$ is a $m\times n$ dimensional sensing matrix.

Under the $K$-sparse assumption ${\bf x}$
can be reconstructed by solving the following $l_0$ minimization
problem
\begin{equation}
 \min_{{\bf x}\in\mathbb{R}^n}\norm{{\bf x}}_0 \textit{ subject to} \hspace{2mm}{\bf y}={\bf \Phi}{\bf
 x}.
\end{equation}
[Note that uniqueness of the $K$-sparse solution requires
every $2K$ column of ${\bf \Phi}$ to be linearly
independent.]
 The above $l_0$ minimization problem provides the sparsest solution for ${\bf x}$.
 However, the $l_0$ minimization problem is a non-convex problem
 and is NP-hard. The feasible practical algorithm for this inverse problem may be broadly classified into two categories, namely convex relaxation 
and  greedy pursuits.
\subsubsection{Convex Relaxation}
This approach translates the non-convex $l_0$ problem into relaxed convex problem using its closest convex $l_1$ norm. This  imposes  ``Restricted Isometry Property (RIP)" condition of appropriate order on ${\bf \Phi}$ as defined below.
\begin{definition}
 A matrix ${\bf\Phi}^{m \times n}$ satisfies RIP of order K if there exists a constant $\delta\in(0,1)$ for all index set $I \subset \{1,2, \cdots ,N\}$ with 
$|I|\leq K$ such that 
\begin{align}
(1-\delta)||{\bf q}||_2^2\leq||{\bf \Phi}_I {\bf q}||_2^2\leq(1+\delta)||{\bf q}||_2^2 .
\end{align}
The RIP constant $\delta_{K}$ is defined as the smallest
value of all $\delta$ for which the RIP is satisfied.
\end{definition}

There are three main directions under this category, namely the
basis pursuit (BP) \cite{BP1}, the basis pursuit de-noising (BPDN)
\cite{Saunders} and the LASSO \cite{lasso}. The reconstruction
problem is formulated under them as,
\begin{eqnarray}
&1.& \textit{BP:}\hspace{2mm} \min_{{\bf x}\in\mathbb{R}^N}\norm{{\bf x}}_1 \textit{ subject to} \hspace{2mm}{\bf y}={\bf \Phi}{\bf x} \nonumber\\
&2.&  \textit{BPDN:}\hspace{2mm} \min_{{\bf x}\in\mathbb{R}^N}\lambda\norm{{\bf x}}_1+\norm{{\bf r}}^2_2\textit{ s.t} \hspace{2mm}{\bf r}={\bf y}-{\bf \Phi}{\bf x}\nonumber\\
&3.&  \textit{LASSO:}\hspace{2mm} \min_{{\bf x}\in\mathbb{R}^N}\norm{{\bf y}-{\bf \Phi}{\bf x}}^2_2\textit{ s.t} \hspace{2mm}\norm{{\bf x}}_1\leqslant\epsilon\nonumber
\end{eqnarray}
The BP problem can be solved by standard polynomial time
algorithms of linear programming. The exact K-sparse signal
reconstruction by BP algorithm based on RIP was first investigated
in \cite{Tao} with the following bound on $\delta$ : $
 \delta_K+\delta_{2K}+\delta_{3K}<1$.
Later the bound was refined as  $\delta_{2K}<\sqrt{2}-1$
\cite{candesRIP}, $\delta_{1.75K}<\sqrt{2}-1$ \cite{Cai}
 and  $\delta_{2K}<0.4652$ \cite{Foucart}.
The BPDN and LASSO problem can be solved by efficient quadratic
programming (QP) like primal-dual interior method. However, the
regularization parameters $\lambda$ and $\epsilon$ play a crucial
role in the performance of these algorithms. 
The convex relaxation technique provides uniform guarantee for
sparse recovery. However, the complexity of $\ell_1$ minimization technique is large enough ($\mathcal{O}(n^3)$) for some applications (e.g. real time video processing).

\subsubsection{Greedy Pursuits}
This approach recovers the $K$-sparse signal by iteratively
constructing the support set  of the sparse signal (\textit{i.e.} index of
non-zero elements in the sparse vector). At each iteration, it
updates its support set by appending the index of one or more
columns (called atoms) of the matrix ${\bf \Phi}$ (often called
dictionary) by some greedy principles. 
This category includes algorithms like orthogonal matching
 pursuit (OMP) \cite{OMP_Tropp}, generalized orthogonal matching
 pursuit (gOMP) \cite{wang_2012b,Liu}, orthogonal least square (OLS) \cite{OLS},compressive sampling matching pursuit
(CoSaMP) \cite{Needell}, subspace pursuit (SP) \cite{Dai} and so on. These algorithms offer very
fast convergence rate with high accuracy in reconstruction performance, but they lack
proper theoretical convergence guaranty. Among these greedy algorithms, OMP is widely used because of its 
simplicity. The theoretical guaranty of OMP algorithm for an exact recovery
of the sparse signal under a $K+1^{th}$ order RIP condition on
${\bf \Phi}$ is improved in the following way:
 $\delta_{K+1}<\frac{1}{3\sqrt{K}}$ in \cite{Davenport}, $\delta_{K+1}<\frac{1}{1+2\sqrt{K}}$ in \cite{Huang},
$\delta_{K+1}<\frac{1}{\sqrt{2K}}$ in \cite{Liu}  and $\delta_{K+1} < \frac {1}{\sqrt{K}+1}$ in \cite{wang_2012a,Qun} .

\subsection{Our contribution in this paper}
In this paper, we have analyzed the theoretical performance of gOMP algorithm in a different approach and our theoretical result improves the bound
on RIP of order $NK$ from $\delta_{NK} < \frac {\sqrt{N}}{\sqrt{K}+3\sqrt{N}}$ \cite{wang_2012b} to $\delta_{NK} < \frac {\sqrt{N}}{\sqrt{K}+2\sqrt{N}}$.
 we have also presented another approach which results in a RIP bound of order $NK+1$ with 
 $\delta_{NK+1} < \frac {\sqrt{N}}{\sqrt{K}+\sqrt{N}}$. Finally, we have discussed the theoretical performance of this algorithm under noisy measurement
and proposed a bound on signal to noise ratio (SNR=$\frac{||{\bf y}||_2}{||{\bf n}||_2}$) for correct reconstruction of support set.

\subsection{Organization of the paper}
Rest of the paper is organized as follows. Next section presents the notations used in this paper and a brief review of OMP and gOMP  
 algorithms. 
In section III, theoretical analysis of gOMP algorithm for noiseless observations is presented. In section IV, analysis of this algorithm in presence of
noise is provided. Discussion is presented in section V and conclusions are drawn in section VI.

\section{Notations and a brief review of OMP and \textit{g}OMP algorithms}

\subsection{Notations} 
The following natations  will be used in this paper. Let the columns of ${\bf \Phi}$ matrix be called as atoms where
${\bf \Phi} = [{\bf \phi}_1 {\bf \phi}_2 {\bf \phi}_3 ... {\bf \phi}_n]$. The matrix ${\bf \Phi}_A$ represents the sub-matrix of ${\bf\Phi}$ with columns indexed by the elements 
present in set A. Similarly {\bf x}$_A$ represents the sub-vector of  {\bf x} with elements whose indices are given in set A. T is the true support set of
 {\bf x} and $\Lambda^k$ is the estimated support set after k iterations of algorithm. 
${\bf \Phi}_{\Lambda^k}^\dag = ({\bf \Phi}_{\Lambda^k}^T{\bf \Phi}_{\Lambda^k})^{-1}{\bf \Phi}_{\Lambda^k}^T$  is the pseudo-inverse of ${\bf \Phi}i_{\Lambda^k}$. 
Here we assume that  ${\bf \Phi}_{\Lambda^k}$ has full column rank ($\Lambda^k<m$). ${\bf P}_{\Lambda^k}={\bf \Phi}_{\Lambda^k}{\bf \Phi}_{\Lambda^k}^\dag$ is 
the projection operator onto column space of  ${\bf \Phi}_{\Lambda^k}$ and ${\bf P}_{\Lambda^k}^\perp={\bf I}-{\bf P}_{\Lambda^k}$ is the projection operator 
upon the rejection space of span($\Lambda^k$). ${\bf A}_{\Lambda^k}={\bf P}_{\Lambda^k}^\perp{\bf \Phi}$ is a matrix obtained by orthogonalizing 
(projecting onto rejection space) the columns of ${\bf \Phi}$ against span(${\bf \Phi}_{\Lambda^k}$).\\
For referring to previous results we use the following notation. Suppose an equation follows from the result of Lemma 1 then L1 
is mentioned at the top of the inequality/equality like $\stackrel{L1}{>}$. Similarly if an equation follows from another equation 
or defination or theorem then it is mentioned as $\stackrel{(1)}{=}$ or $\stackrel{D1}{=}$ or $\stackrel{T1}{>}$ respectively.

\subsection{A brief review of OMP and gOMP algorithms}
The algorithm is presented in Table 1. The OMP algorithm starts with an empty support set ${\Lambda}^0$ and keep selecting a single atom in every iteration 
based on highest correlation with residual signal ${\bf r}^{k-1}$ until the support set is full with the index of $K$ atoms.
At $k^{th}$ iteration, the residual signal ${\bf r}^{k}$ is updated using the difference between signal ${\bf y}$ and its 
orthogonal projection on the subspace spanned by the atoms corresponding to the current support set ${\Lambda}^{k}$.
Generalized OMP algorithm is very similar to OMP where  N largest correlated atoms are selected in each step. This simple modification in identification 
step results in improved reconstruction performance for $K$-sparse signal \cite{wang_2012b}.

\begin{table}
\caption {OMP algorithm}

\begin{tabular}{p{7cm}}
\hline
\textbf{Input}: measurement {\bf y}$\in \mathbb{R}^m$,sensing matrix${\bf \Phi}^{m\times n}$\\
\textbf{Initialization}: counter k=0, residue ${\bf r}^0$={\bf y},\\ estimated support set $\Lambda^k=\emptyset$\\
\hline
\textbf{While}  k$<$K and $|| {\bf r}^k||_2<|| {\bf r}^{k-1}||_2$\\
k=k+1\\
\textit{Identification}: $h^k$=arg max$_j$ $|\langle r^{k-1},{\bf \Phi}_j\rangle|$\\
\textit{Augment}: $\Lambda^k=\Lambda^{k-1} \cup\{h^k\}$\\
\textit{Estimate}: ${\bf x}_{\Lambda^k}=$arg min$_{z:supp(z)\in\Lambda^k}||y-{\bf \Phi}_{\Lambda^k}z||_2$\\
\textit{Update}: ${\bf r}^k=y-{\bf \Phi}_{\Lambda^k}{\bf x}_{\Lambda^k}$\\ 
\textbf{End While}\\
\textbf{Output}: ${\bf x}=$arg min$_{z:supp(z)\in\Lambda^K}||y-{\bf \Phi}_{\Lambda^k}z||_2$\\
\hline In \textbf{gOMP algorithm} the Identification step is \\only different. We select a vector ${\bf h}^k\in \mathbb{R}^N$\\ which has N largest entries in $|{\bf \Phi}^Tr^{k-1}|$.($NK<m$)\\ \hline
\end{tabular}
\end{table}


\section{Analysis of \textit{g}OMP}
To analyse gOMP algorithm we use some commonly used properties  of RIP as summarized in \textit{Lemma 1}.
\begin{lemma}{(Lemma 1 in \cite{Dai}\cite{Needell})}\\
\hspace*{3mm} a) $\delta_{K_1}<\delta_{K_2} \forall K_1<K_2$\text{ (monotonicity)}\\ 
\vspace{2mm} 
\hspace{2mm} b) $(1-\delta_{|I|})||{\bf q}||_2\leq||{\bf \Phi}_I^T {\bf \Phi}_I{\bf q}||_2\leq(1+\delta_{|I|})||{\bf q}||_2$\\
 \vspace{2mm} 
\hspace{2mm} c) $||{\bf \Phi}_I^T{\bf q}||_2<\sqrt{1+\delta_{|I|}}||{\bf q}||_2$\\ 
 \vspace{2mm}
 d) $\langle{\bf \Phi}_I {\bf q},{\bf \Phi}_J {\bf p}\rangle\leq||{\bf q}||_2||{\bf \Phi}_I^T{\bf \Phi}_J{\bf p}||_2<\delta_{|I|+|J|}||{\bf p}||_2 ||{\bf q}||_2$\\
for $I,J\subset \{1,2,...,n\},|I|$, $ {\bf q}\in\mathbb{R}^I$ and $ {\bf p}\in\mathbb{R}^J$
\end{lemma}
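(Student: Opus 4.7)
Each of the four parts is a standard consequence of the defining RIP inequality, so my plan is to work through them in increasing order of difficulty, throughout viewing ${\bf \Phi}_I^T{\bf \Phi}_I$ through its spectrum.

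Part (a) is immediate from the definition: every index set with $|I|\leq K_1$ also has $|I|\leq K_2$ whenever $K_1<K_2$, so any constant $\delta$ that witnesses the RIP of order $K_2$ automatically witnesses it for order $K_1$. Taking the infimum on both sides gives $\delta_{K_1}\leq\delta_{K_2}$.

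For parts (b) and (c), the key observation is that the two-sided bound $(1-\delta_{|I|})||{\bf q}||_2^2\leq ||{\bf \Phi}_I {\bf q}||_2^2\leq (1+\delta_{|I|})||{\bf q}||_2^2$ is equivalent to all eigenvalues of the symmetric matrix ${\bf \Phi}_I^T{\bf \Phi}_I$ lying in $[1-\delta_{|I|},1+\delta_{|I|}]$, which in turn places the singular values of ${\bf \Phi}_I$ in $[\sqrt{1-\delta_{|I|}},\sqrt{1+\delta_{|I|}}]$. Part (c) then follows from $||{\bf \Phi}_I^T||_{\mathrm{op}}=||{\bf \Phi}_I||_{\mathrm{op}}\leq\sqrt{1+\delta_{|I|}}$. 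For part (b), applying the bounds on the largest and smallest eigenvalues of ${\bf \Phi}_I^T{\bf \Phi}_I$ directly to $||{\bf \Phi}_I^T{\bf \Phi}_I {\bf q}||_2$ yields both inequalities simultaneously.

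Part (d) is the approximate orthogonality inequality and is the only step requiring more than bookkeeping. The first inequality is Cauchy--Schwarz applied to $\langle{\bf \Phi}_I {\bf q},{\bf \Phi}_J {\bf p}\rangle=\langle {\bf q},{\bf \Phi}_I^T{\bf \Phi}_J {\bf p}\rangle$. For the second, I would assume $I\cap J=\emptyset$ (overlapping supports reduce to the disjoint case by splitting) and normalize so that $||{\bf p}||_2=||{\bf q}||_2=1$. A polarization argument then applies RIP of order $|I|+|J|$ to the vectors on $I\cup J$ obtained by concatenating $\pm{\bf p}$ with ${\bf q}$:
\begin{align*}
||{\bf \Phi}_I {\bf q}+{\bf \Phi}_J {\bf p}||_2^2-||{\bf \Phi}_I {\bf q}-{\bf \Phi}_J {\bf p}||_2^2=4\langle{\bf \Phi}_I {\bf q},{\bf \Phi}_J {\bf p}\rangle.
\end{align*}
Since both concatenated vectors have squared norm $2$, RIP gives an upper bound $2(1+\delta_{|I|+|J|})$ on the first term and a lower bound $2(1-\delta_{|I|+|J|})$ on the second, so the difference is at most $4\delta_{|I|+|J|}$; reinstating arbitrary ${\bf p},{\bf q}$ yields the claim. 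The only mildly substantive step in the entire lemma is this polarization identity; the rest is a direct unpacking of the RIP definition.
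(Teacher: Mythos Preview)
Your proof sketch is correct and follows the standard arguments for these RIP consequences. However, note that the paper does not actually prove Lemma~1: it is simply quoted from \cite{Dai} and \cite{Needell} (the header reads ``Lemma 1 in \cite{Dai}\cite{Needell}'') and is used throughout without justification. So there is no paper proof to compare your proposal against.

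A couple of minor points on your write-up. For part (a), you correctly obtain $\delta_{K_1}\leq\delta_{K_2}$, which is the right statement (the paper's strict inequality is a slight abuse). For part (d), your polarization argument directly bounds $\langle{\bf \Phi}_I{\bf q},{\bf \Phi}_J{\bf p}\rangle$ rather than $\|{\bf \Phi}_I^T{\bf \Phi}_J{\bf p}\|_2$; to recover the intermediate bound in the lemma you should remark that taking the supremum of the inner product over unit ${\bf q}$ yields exactly $\|{\bf \Phi}_I^T{\bf \Phi}_J{\bf p}\|_2$. Also, the disjointness hypothesis $I\cap J=\emptyset$ is implicit in the paper's statement (as it must be for the bound to hold), so your assumption there is appropriate rather than an additional restriction.
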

Note that, the algorithm can reconstruct a $K$-sparse signal by $K^{th}$ iterations if atleast one correct index is chosen in each iteration.
 Now, let in $k+1^{th}$ iteration $\beta^k_i=\langle{\bf \Phi},{\bf r^k}\rangle$ for 
$i\in T$ and $\alpha^k_j=\langle{\bf \Phi}_j,{\bf r^k}\rangle$ for $j\notin T$ where $\beta^k_i$ 's and $\alpha^k_j$ 's 
are arranged in descending order. So $\beta^k_1>\beta^k_2 ...>\beta^k_N$ are N largest correlations in support set and similarly
  $\alpha^k_1>\alpha^k_2 ...>\alpha^k_N$ are N largest correlations of incorrect indices. Now if we ensure that $\beta^k_1>\alpha^k_N$
 then atleast $\beta^k_1$ will appear in the overall N largest correlated atoms which are selected. Hence, we find the lower bound of $\beta^k_1$ 
and upper bound of $\alpha^k_N$ and compare them. In this paper, we propose two RIP bounds which are presented as Theorem 1 and Theorem 2.

\begin{theorem}
gOMP can recover {\bf x} exactly when ${\bf \Phi}$ satisfies RIP of order $NK$ with 
\begin{equation}\delta_{NK} < \frac {\sqrt{N}}{\sqrt{K}+2\sqrt{N}}\nonumber
\end{equation}
\end{theorem}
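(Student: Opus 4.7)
My approach follows the structure sketched by the authors: proceed by induction on $k$ to show that gOMP picks up a previously unfound index of $T$ in every iteration, so that after $K$ iterations $T\subseteq\Lambda^K$ and the least-squares Estimate step returns ${\bf x}$ exactly. Assuming the inductive hypothesis $|T\cap\Lambda^k|\ge k$, the theorem reduces to verifying $\beta_1^k>\alpha_N^k$ at step $k+1$. Throughout I would set $T'=T\setminus\Lambda^k$ (so $|T'|\le K-k$) and use ${\bf r}^k={\bf P}_{\Lambda^k}^\perp{\bf \Phi}_{T'}{\bf x}_{T'}$; the set $F$ achieving $\alpha_1^k,\dots,\alpha_N^k$ is disjoint from both $T$ and $\Lambda^k$ because correlations of ${\bf r}^k$ with any atom in $\Lambda^k$ vanish.

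For the lower bound on $\beta_1^k$, I would combine the ``max dominates average'' inequality $\beta_1^k\ge \|{\bf \Phi}_{T'}^T{\bf r}^k\|_2/\sqrt{|T'|}$ with the identity $\|{\bf r}^k\|_2^2=\langle{\bf r}^k,{\bf \Phi}_{T'}{\bf x}_{T'}\rangle\le \|{\bf \Phi}_{T'}^T{\bf r}^k\|_2\,\|{\bf x}_{T'}\|_2$ to obtain $\beta_1^k\ge \|{\bf r}^k\|_2^2/(\sqrt{K-k}\,\|{\bf x}_{T'}\|_2)$. A complementary lower bound on $\|{\bf r}^k\|_2$ in terms of $\|{\bf x}_{T'}\|_2$ would come from feeding ${\bf P}_{\Lambda^k}^\perp{\bf \Phi}_{T'}{\bf x}_{T'}$ through Lemma~1(b) on the index set $T'\cup\Lambda^k$ of total size at most $(K-k)+kN\le NK$, which is where $\delta_{NK}$ first enters.

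For the upper bound on $\alpha_N^k$ I would start from $\alpha_N^k\le \|{\bf \Phi}_F^T{\bf r}^k\|_2/\sqrt{N}$, write ${\bf r}^k={\bf \Phi}_{T'}{\bf x}_{T'}-{\bf \Phi}_{\Lambda^k}{\bf w}$ with ${\bf w}={\bf \Phi}_{\Lambda^k}^\dag{\bf \Phi}_{T'}{\bf x}_{T'}$, and split $\|{\bf \Phi}_F^T{\bf r}^k\|_2$ into two Lemma~1(d) pieces: one using the disjoint sets $F$ and $T'$ (with size bound $N+K-k\le NK$), and one using $F$ and $\Lambda^k$ (with size bound $N+kN=(k+1)N\le NK$); the auxiliary norm $\|{\bf w}\|_2$ is in turn controlled by Lemma~1(b) on $\Lambda^k$. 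Monotonicity (Lemma~1(a)) then lets me replace every $\delta$ that appears by $\delta_{NK}$.

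Combining the two bounds and solving for $\delta_{NK}$ in the worst case over $k$ and over $l=|T\cap\Lambda^k|\ge k$ should yield exactly $\delta_{NK}<\sqrt{N}/(\sqrt{K}+2\sqrt{N})$. The main obstacle I anticipate is the bookkeeping needed to shave the $3\sqrt{N}$ of \cite{wang_2012b} down to $2\sqrt{N}$: the improvement seems to require (i) exploiting the identity $\|{\bf r}^k\|_2^2=\langle{\bf r}^k,{\bf \Phi}_{T'}{\bf x}_{T'}\rangle$ in the $\beta_1^k$ step instead of a looser Cauchy--Schwarz on $\|{\bf r}^k\|_2$ alone, and (ii) arranging the split of ${\bf \Phi}_F^T{\bf r}^k$ so that every index set stays strictly within the order-$NK$ budget, so that no $\delta_{NK+1}$ (which would push the result into Theorem~2's weaker regime) ever leaks in.
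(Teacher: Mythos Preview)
Your overall architecture---induction on $k$, then at each step proving $\beta_1^k>\alpha_N^k$ via a lower bound on $\beta_1^k$ and an upper bound on $\alpha_N^k$---matches the paper exactly, and your $\beta_1^k$ step hinges on the same Cauchy--Schwarz identity $\|{\bf r}^k\|_2^2=\langle{\bf r}^k,{\bf \Phi}_{T'}{\bf x}_{T'}\rangle\le \|{\bf \Phi}_{T'}^T{\bf r}^k\|_2\,\|{\bf x}_{T'}\|_2$ (the paper's display~(7)). Where you diverge is in the remaining two ingredients. For $\alpha_N^k$ the paper simply quotes Lemma~2 (Lemma~3.6 of \cite{wang_2012b}), whereas you re-derive it from scratch via the split ${\bf r}^k={\bf \Phi}_{T'}{\bf x}_{T'}-{\bf \Phi}_{\Lambda^k}{\bf w}$; your derivation is essentially the original proof of that lemma and lands on the same bound $\alpha_N^k\le \frac{\delta_{NK}}{(1-\delta_{NK})\sqrt N}\|{\bf x}_{T'}\|_2$. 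For the lower bound on $\|{\bf r}^k\|_2$, the paper's central device is its new Lemma~3 (a modified RIP for ${\bf A}_{\Lambda^k}={\bf P}_{\Lambda^k}^\perp{\bf\Phi}$), which yields $\|{\bf r}^k\|_2^2\ge\bigl(1-\tfrac{\delta_{NK}}{1-\delta_{NK}}\bigr)\|{\bf x}_{T'}\|_2^2$ and hence $\beta_1^k\ge\frac{1-2\delta_{NK}}{(1-\delta_{NK})\sqrt K}\|{\bf x}_{T'}\|_2$; combining with Lemma~2 gives \emph{exactly} $\delta_{NK}<\sqrt N/(\sqrt K+2\sqrt N)$. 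Your route instead writes ${\bf r}^k={\bf \Phi}_{T'\cup\Lambda^k}{\bf x}''$ (a representation the paper only exploits later, in Theorem~2) and applies RIP directly on $T'\cup\Lambda^k$; note that the correct citation is Definition~1, not Lemma~1(b), and this actually gives the \emph{tighter} $\|{\bf r}^k\|_2^2\ge(1-\delta_{NK})\|{\bf x}_{T'}\|_2^2$. Consequently your inequality becomes $\sqrt N(1-\delta_{NK})^2>\sqrt K\,\delta_{NK}$, which is a (slightly) weaker hypothesis than the theorem's and so still proves it---but it will not collapse to the clean threshold $\sqrt N/(\sqrt K+2\sqrt N)$ ``exactly'' as you anticipate. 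In short: your plan is sound and arguably simpler than the paper's Lemma~3 machinery, at the cost of a messier final constant; if you want the stated constant on the nose, you would need the paper's Lemma~3 in place of your direct-RIP step.
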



\begin{proof} To start with we use the same upper bound on $\alpha^k_N$ as presented in \cite{wang_2012b}. Interested readers may refer \cite{wang_2012b} for proof of the following lemma.

\begin{lemma}{(Lemma 3.6 in \cite{wang_2012b})} \\
$\alpha^k_N< \dfrac{\delta_{NK}}{1-\delta_{NK}} \dfrac{||{\bf x}_{T-\Lambda^k}||_2}{\sqrt{N}}$
\end{lemma}

Now we go about finding a better bound on $\beta^k_1$. We observe that ${\bf r}^k={\bf P}_{\Lambda^k}^\perp {\bf y}$ and $\beta^k_1=||{\bf \Phi}^T_T{\bf r}^k||_\infty$ and for any $i\in\Lambda^k$
\begin{align}\langle{\bf \phi}_i,{\bf r}^k\rangle= \langle{\bf \phi}_i,{\bf P}^\perp_{\Lambda^k}{\bf r}^k\rangle=\langle {\bf P}^\perp_{\Lambda^k}{\bf \phi}_i , {\bf r}^k\rangle=0
\end{align}
So,
\begin{align}
\nonumber&||{\bf \Phi}_T^T {\bf r}^k||_\infty 
>\frac{1}{\sqrt{K}}||{\bf \Phi}^T_T {\bf r}^k||_2 \text{ (as } |T|=K) \\
\nonumber&\stackrel{(3)}{=}\frac{1}{\sqrt{K}}||{\bf \Phi}_{T-\Lambda^k}^T {\bf r}^k||_2=\frac{1}{\sqrt{K}}||{\bf \Phi}^T_{T-\Lambda^k} {\bf P}_{\Lambda^k}^\perp {\bf y}||_2 \\
\nonumber&=\frac{1}{\sqrt{K}}||{\bf \Phi}^T_{T-\Lambda^k} ({\bf P}_{\Lambda^k}^\perp)^T {\bf P}_{\Lambda^k}^\perp {\bf y}||_2 \text{ (as } {\bf P}={\bf P}^T \text{\&} {\bf P}={\bf P}^2)\\
\nonumber&=\frac{1}{\sqrt{K}}||({\bf P}_{\Lambda^k}^\perp {\bf \Phi}_{T-\Lambda^k} )^T {\bf P}_{\Lambda^k}^\perp {\bf \Phi}_T {\bf x}_T||_2\\
&\stackrel{(3)}{=}\frac{1}{\sqrt{K}}||({\bf P}_{\Lambda^k}^\perp {\bf \Phi}_{T-\Lambda^k} )^T {\bf P}_{\Lambda^k}^\perp {\bf \Phi}_{T-\Lambda^k} {\bf x}_{T-\Lambda^k}||_2
\end{align} 
Now, to proceed further  we require the following lemma.

\begin{lemma}{(Extension of lemma 3.2 from \cite{Davenport})}\\
 ${\bf A}_{I_1}^{m\times n}$ satisfies modified RIP of $\frac{\delta_{|I_1|+|I_2|}}{1-\delta_{|I_1|+|I_2|}}$ $$ (1-\frac{\delta_{|I_1|+|I_2|}}{1-\delta_{|I_1|+|I_2|}})||{\bf u}||_2^2<||{\bf A}_{I_1}{\bf u}||_2^2<(1+\delta_{|I_1|+|I_2|})||{\bf u}||_2^2$$
and also
$$ (1-(\frac{\delta_{|I_1|+|I_2|}}{1-\delta_{|I_1|+|I_2|}})^2)||{\bf \Phi}{\bf u}||_2^2<||{\bf A}_{I_1}{\bf u}||_2^2<||{\bf \Phi}{\bf u}||_2^2$$
where {\bf u}$\in\mathbb{R}^n$, $I_1,I_2\in\{1,...,n\}$ $supp({\bf u})\in I_2$ and $I_1\cap I_2=\emptyset$
\end{lemma}

 \textit{Proof:} In Appendix A

Now Let ${\bf x}'=\begin{bmatrix}{\bf x}_{T-\Lambda^k} \\  0\end{bmatrix}$ ,  ${\bf x}'\in\mathbb{R}^n$. 
So, ${\bf A}_{\Lambda^k}{\bf x}'={\bf P}^\perp_{\Lambda^k}{\bf \Phi} {\bf x}'={\bf P}^\perp_{\Lambda^k}\begin{bmatrix}{\bf \Phi}_{T-\Lambda^k} &  {\bf \Phi}_{(T-\Lambda^k)^c}\end{bmatrix}\begin{bmatrix}{\bf x}_{T-\Lambda^k} \\ 0\end{bmatrix}={\bf P}^\perp_{\Lambda^k}{\bf \Phi}_{T-\Lambda^k}{\bf x}_{T-\Lambda^k}$
and also ${\bf P}^\perp_{\Lambda^k}{\bf \Phi}_{T}{\bf x}_T={\bf P}^\perp_{\Lambda^k}\begin{bmatrix}{\bf \Phi}_{T-\Lambda^k} &  {\bf \Phi}_{(T\cap\Lambda^k)}\end{bmatrix}\begin{bmatrix}{\bf x}_{T-\Lambda^k} \\  {\bf x}_{T\cap\Lambda^k}\end{bmatrix} \stackrel{(3)}{=} {\bf P}^\perp_{\Lambda^k}{\bf \Phi}_{T-\Lambda^k}{\bf x}_{T-\Lambda^k}$. Hence we get
 \begin{align}\nonumber&{\bf A}_{\Lambda^k} {\bf x}'={\bf P}^\perp_{\Lambda^k}{\bf \Phi} {\bf x}'={\bf P}^\perp_{\Lambda^k}{\bf \Phi}_T {\bf x}_T\\&={\bf r}^k=
{\bf P}^\perp_{\Lambda^k}{\bf \Phi}_{T-\Lambda^k} {\bf x}_{T-\Lambda^k}
\end{align}
Hence, from Lemma 3, with $I_1=\Lambda^k$ and $I_2=supp({\bf x}')=T-\Lambda^k$  and $I_1\cap I_2=\emptyset ,|I_1|+|I_2|=Nk+K-l$ where $T\cap\Lambda^k=l$ we get 
\begin{align}
\nonumber&||{\bf A}_{\Lambda^k}{\bf x}'||_2^2 > (1-\dfrac{\delta_{Nk+K-l}}{1-\delta_{Nk+K-l}})||{\bf x}'||_2^2\\
&\stackrel{L1a}{>}(1-\dfrac{\delta_{NK}}{1-\delta_{NK}})||{\bf x}_{T-\Lambda^k}||_2^2
\end{align} 
Moreover,
\begin{align}
\nonumber&||{\bf A}_{\Lambda^k}{\bf x}'||_2^2 \stackrel{(5)}{=}||{\bf P}^\perp_{\Lambda^k}{\bf \Phi}_{T-\Lambda^k} {\bf x}_{T-\Lambda^k}||_2^2\\
\nonumber&=\langle {\bf P}^\perp_{\Lambda^k}{\bf \Phi}_{T-\Lambda^k} {\bf x}_{T-\Lambda^k},{\bf P}^\perp_{\Lambda^k}{\bf \Phi}_{T-\Lambda^k} {\bf x}_{T-\Lambda^k}\rangle\\
\nonumber&=\langle({\bf P}^\perp_{\Lambda^k}{\bf \Phi}_{T-\Lambda^k})^TP^\perp_{\Lambda^k}{\bf \Phi}_{T-\Lambda^k} {\bf x}_{T-\Lambda^k}, {\bf x}_{T-\Lambda^k}\rangle\\
&<||({\bf P}^\perp_{\Lambda^k}{\bf \Phi}_{T-\Lambda^k})^TP^\perp_{\Lambda^k}{\bf \Phi}_{T-\Lambda^k} {\bf x}_{T-\Lambda^k} ||_2 ||  {\bf x}_{T-\Lambda^k}||_2
\end{align}
Combining (6) and (7) we get 
\begin{align}\nonumber&||({\bf P}^\perp_{\Lambda^k}{\bf \Phi}_{T-\Lambda^k})^TP^\perp_{\Lambda^k}{\bf \Phi}_{T-\Lambda^k} {\bf x}_{T-\Lambda^k} ||_2\\
&>  (1-\dfrac{\delta_{NK}}{1-\delta_{NK}})||{\bf x}_{T-\Lambda^k}||_2
\end{align}
Therefore combining this result with (4) and (8) we get \begin{equation}\beta^k_1>\dfrac{1}{\sqrt{K}}(1-\dfrac{\delta_{NK}}{1-\delta_{NK}})||{\bf x}_{T-\Lambda^k}||_2\end{equation}
Making lower bound on $\beta^k_1$ greater than upper bound of $\alpha^k_N$ (from Lemma 2) bring us to the result.
\end{proof}

The next theorem states the other bound for gOMP success.

\begin{theorem}
 gOMP can recover {\bf x} exactly when ${\bf \Phi}$ satisfies RIP of order $NK+1$ with 
\begin{equation}\delta_{NK+1} < \frac {\sqrt{N}}{\sqrt{K}+\sqrt{N}}\nonumber
\end{equation}
\end{theorem}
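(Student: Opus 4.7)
The plan follows the template of Theorem~1: lower bound $\beta^k_1$, upper bound $\alpha^k_N$, and enforce $\beta^k_1>\alpha^k_N$. To extract the sharper constant $\sqrt{N}/(\sqrt{K}+\sqrt{N})$, the two bounds must be clean enough to yield the linear comparison
\[
(1-\delta_{NK+1})\sqrt{N} \;>\; \delta_{NK+1}\sqrt{K},
\]
i.e.\ bounds of the shape $\beta^k_1 > (1-\delta_{NK+1})\|{\bf x}_{T-\Lambda^k}\|_2/\sqrt{K}$ and $\alpha^k_N < \delta_{NK+1}\|{\bf x}_{T-\Lambda^k}\|_2/\sqrt{N}$, shed of the $1/(1-\delta)$ clutter that forced the weaker constant in Theorem~1. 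As a sanity check, setting $N=1$ reduces the target to the OMP bound $\delta_{K+1}<1/(\sqrt{K}+1)$ of \cite{wang_2012a}.

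For the $\beta^k_1$ side, I would start from the identity ${\|\bf r}^k\|_2^2 = \langle {\bf r}^k,{\bf y}\rangle = \langle {\bf r}^k,{\bf \Phi}_{T-\Lambda^k}{\bf x}_{T-\Lambda^k}\rangle$, which uses only equation~(3) and ${\bf y}={\bf \Phi}_T{\bf x}_T$. A single Cauchy--Schwarz gives $\|{\bf \Phi}_{T-\Lambda^k}^T{\bf r}^k\|_2 \geq \|{\bf r}^k\|_2^2/\|{\bf x}_{T-\Lambda^k}\|_2$, and since ${\bf \Phi}_T^T{\bf r}^k$ has at most $K$ nonzero entries this yields $\beta^k_1 \geq \|{\bf r}^k\|_2^2/(\sqrt{K}\,\|{\bf x}_{T-\Lambda^k}\|_2)$. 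The remaining ingredient is the sharper estimate $\|{\bf r}^k\|_2^2 \geq (1-\delta_{NK+1})\|{\bf x}_{T-\Lambda^k}\|_2^2$; this should drop out by writing ${\bf r}^k = {\bf A}{\bf w}$ with ${\bf A}=[{\bf \Phi}_{\Lambda^k}\mid{\bf \Phi}_{T-\Lambda^k}]$ (whose disjoint column index set has size at most $NK$) and ${\bf w}=[-{\bf \Phi}_{\Lambda^k}^\dagger {\bf \Phi}_{T-\Lambda^k}{\bf x}_{T-\Lambda^k};\,{\bf x}_{T-\Lambda^k}]$, applying the standard RIP lower bound to ${\bf A}$, and discarding the nonnegative ${\|\bf \Phi}_{\Lambda^k}^\dagger{\bf \Phi}_{T-\Lambda^k}{\bf x}_{T-\Lambda^k}\|_2^2$ contribution to $\|{\bf w}\|_2^2$.

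For the $\alpha^k_N$ side, let $S$ denote the $N$ off-support indices achieving the top off-support correlations, so $\sqrt{N}\,\alpha^k_N \leq \|{\bf \Phi}_S^T{\bf r}^k\|_2$. Substituting the decomposition ${\bf r}^k = {\bf \Phi}_{T-\Lambda^k}{\bf x}_{T-\Lambda^k} - {\bf \Phi}_{\Lambda^k}{\bf e}$ (with ${\bf e} = \hat{{\bf x}}_{\Lambda^k}-{\bf x}_{\Lambda^k}$) splits the inner product into two pieces, each handled by Lemma~1(d) on a disjoint column pair: $(S,T-\Lambda^k)$ of joint size $\leq N+K \leq NK+1$ and $(S,\Lambda^k)$ of joint size $\leq N(k+1)\leq NK$. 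Monotonicity then collapses everything into $\delta_{NK+1}$.

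The main obstacle is removing the $1/(1-\delta_{NK+1})$ contamination from the ${\bf \Phi}_{\Lambda^k}{\bf e}$ term so that the $\alpha^k_N$ bound really comes out in the clean form $\delta_{NK+1}\|{\bf x}_{T-\Lambda^k}\|_2/\sqrt{N}$: a naive use of $\|{\bf e}\|_2 \leq \delta_{NK+1}\|{\bf x}_{T-\Lambda^k}\|_2/(1-\delta_{NK+1})$ just reproduces Lemma~2 and hence Theorem~1's constant. The resolution will likely exploit $r^k\perp\mathrm{span}({\bf \Phi}_{\Lambda^k})$ (equation~(3)) to cancel the least-squares correction inside the inner product, rather than bounding $\|{\bf \Phi}_{\Lambda^k}{\bf e}\|_2$ in absolute value; a parallel subtlety is ensuring that the $\beta^k_1$ lower bound is derived \emph{without} routing through Lemma~3, which would only furnish the weaker factor $(1-2\delta)/(1-\delta)$.
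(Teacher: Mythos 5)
Your $\beta^k_1$ lower bound is sound and is, in disguise, the paper's own: your concatenated matrix ${\bf A}=[{\bf \Phi}_{\Lambda^k}\mid{\bf \Phi}_{T-\Lambda^k}]$ is exactly ${\bf \Phi}_{T\cup\Lambda^k}$ up to column order, its index set has size $Nk+K-l\leq NK$ (with $l=|T\cap\Lambda^k|\geq k$), so the standard RIP lower bound applies, and together with Cauchy--Schwarz on $\|{\bf r}^k\|_2^2=\langle{\bf \Phi}_{T-\Lambda^k}^T{\bf r}^k,{\bf x}_{T-\Lambda^k}\rangle$ it yields $\beta^k_1\geq(1-\delta_{NK})\|{\bf x}_{T-\Lambda^k}\|_2/\sqrt{K}$. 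That half closes. The genuine gap is the $\alpha^k_N$ side, and your proposed rescue does not work: the orthogonality (3) says ${\bf \Phi}_{\Lambda^k}^T{\bf r}^k=0$, but the quantity to bound is ${\bf \Phi}_S^T{\bf r}^k$ with $S\cap\Lambda^k=\emptyset$, and in your split ${\bf \Phi}_S^T{\bf \Phi}_{T-\Lambda^k}{\bf x}_{T-\Lambda^k}-{\bf \Phi}_S^T{\bf \Phi}_{\Lambda^k}{\bf e}$ the second term contains no inner product with ${\bf r}^k$ that the orthogonality could cancel; ${\bf \Phi}_S^T{\bf \Phi}_{\Lambda^k}{\bf e}\neq 0$ in general, so you are forced back to bounding $\|{\bf e}\|_2$, which reinstates the $1/(1-\delta)$ factor and Theorem~1's constant, exactly as you feared. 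Indeed the clean target $\alpha^k_N<\delta_{NK+1}\|{\bf x}_{T-\Lambda^k}\|_2/\sqrt{N}$ is not provable by these tools and is not what the paper establishes.

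The paper's resolution is to change the yardstick rather than to cancel the correction term: it never decomposes ${\bf r}^k$ and never measures against $\|{\bf x}_{T-\Lambda^k}\|_2$. Writing ${\bf r}^k={\bf \Phi}_{T\cup\Lambda^k}{\bf x}''_{T\cup\Lambda^k}$ with ${\bf x}''=[{\bf x}_{T-\Lambda^k};{\bf z}_{\Lambda^k}]$, ${\bf z}_{\Lambda^k}=-{\bf \Phi}^\dag_{\Lambda^k}{\bf \Phi}_{T-\Lambda^k}{\bf x}_{T-\Lambda^k}$, it bounds \emph{both} quantities against $\|{\bf x}''\|_2$: a single application of Lemma~1(d) gives $\sqrt{N}\,\alpha^k_N\leq\|{\bf \Phi}_W^T{\bf \Phi}_{T\cup\Lambda^k}{\bf x}''\|_2\leq\delta_{N+Nk+K-l}\|{\bf x}''\|_2\leq\delta_{NK+1}\|{\bf x}''\|_2$ (the counting $l\geq k$, $k\leq K-1$ gives $N+Nk+K-l\leq NK+1$, which is precisely where the order $NK+1$ enters), while Lemma~1(b) gives $\beta^k_1\geq\frac{1}{\sqrt{K}}\|{\bf \Phi}_{T\cup\Lambda^k}^T{\bf \Phi}_{T\cup\Lambda^k}{\bf x}''\|_2\geq\frac{1-\delta_{NK}}{\sqrt{K}}\|{\bf x}''\|_2$; the orthogonality (3) is spent only on showing the extra rows ${\bf \Phi}_{\Lambda^k-T}^T{\bf r}^k=0$ cost nothing. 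Since the unknown $\|{\bf x}''\|_2$ appears on both sides, it cancels in $\beta^k_1>\alpha^k_N$, giving $(1-\delta_{NK})/\sqrt{K}>\delta_{NK+1}/\sqrt{N}$ and hence the claimed bound. You are one substitution away from this: in your own $\beta$ derivation you discard the nonnegative $\|{\bf z}_{\Lambda^k}\|_2^2$ block of $\|{\bf w}\|_2^2$ --- keep it, run Cauchy--Schwarz against $\|{\bf x}''\|_2$ instead of $\|{\bf x}_{T-\Lambda^k}\|_2$, and bound $\alpha^k_N$ by one undecomposed application of Lemma~1(d), and the proof closes.
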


\begin{proof} Let us begin by examining the residue ${\bf r}^k$ in the $k+1^{th}$ iteration. In \cite{wang_2012a} it was shown that ${\bf r}^k\in span({\bf \Phi}_T)$ for OMP where estimated support set $\Lambda^k \subset T$. Now we show that in cases where $\Lambda^k$ and T are in general modelled as shown in Fig.1, ${\bf r}^k$ is indeed spanned by ${\bf \Phi}_{T\cup\Lambda^k}$.
\begin{align}
{\bf r}^k\nonumber&={\bf y}-{\bf \Phi}_{\Lambda^k}{\bf \Phi}^\dag_{\Lambda^k}{\bf y}\\
\nonumber&={\bf \Phi}_T {\bf x}_T-{\bf P}_{\Lambda^k}{\bf y}\\
\nonumber&={\bf \Phi}_{T-\Lambda^k} {\bf x}_{T-\Lambda^k}+{\bf \Phi}_{T\cap\Lambda^k} {\bf x}_{T\cap\Lambda^k}\\
\nonumber&\hspace{4mm}-{\bf P}_{\Lambda^k}({\bf \Phi}_{T-\Lambda^k} {\bf x}_{T-\Lambda^k}+{\bf \Phi}_{T\cap\Lambda^k} {\bf x}_{T\cap\Lambda^k})\\
\nonumber&={\bf \Phi}_{T-\Lambda^k} {\bf x}_{T-\Lambda^k}-{\bf P}_{\Lambda^k}{\bf \Phi}_{T-\Lambda^k} {\bf x}_{T-\Lambda^k}\\
&={\bf \Phi}_{T-\Lambda^k} {\bf x}_{T-\Lambda^k}-{\bf \Phi}_{\Lambda^k}{\bf z}_{\Lambda^k}\\
&={\bf \Phi}_{T\cup \Lambda^k}{\bf x}''_{T \cup \Lambda^k}
\end{align}

\begin{figure} [ht]
\centering
\includegraphics[width=50mm,height=30mm]{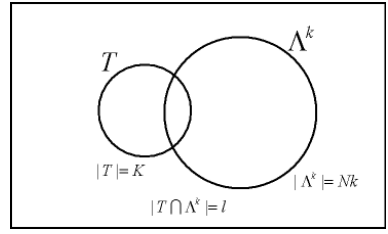}
\caption{Venn diagram for correct and estimated support set}
\label{fig1}
\end{figure}

where (10) follows from the fact that ${\bf P}_{\Lambda^k}{\bf \Phi}_{T-\Lambda^k} {\bf x}_{T-\Lambda^k}\in span({\bf \Phi}_{\Lambda^k})$ and it
 can be viewed as ${\bf \Phi}_{\Lambda^k}{\bf z}_{\Lambda^k}$ where ${\bf z}_{\Lambda^k}\in\mathbb{R}^{\Lambda^k} $ and ${\bf z}_{\Lambda^k}={\bf \Phi}^\dag_{\Lambda^k}{\bf \Phi}_{T-\Lambda^k} {\bf x}_{T-\Lambda^k}$. So ${\bf x}''_{T \cup \Lambda^k}$
 is a vector in $\mathbb{R}^{T\cup\Lambda^k}$. Observe that 
\begin{align}{\bf x}''_{T \cup \Lambda^k}=\begin{bmatrix}x_{T-\Lambda^k}\\z_{\Lambda^k}\end{bmatrix}
\end{align}

Let W be the set of remaining incorrect indices over which $\alpha^k_i $'s are chosen ($W\in (T\cup\Lambda^k)^c$). So,
\begin{align} \alpha^k_N \nonumber&=min(\langle\Phi_i,{\bf r}^k\rangle)\hspace{3mm} (i\in W)\\
\nonumber& < \frac{\sum\alpha_i}{N}<\sqrt{\frac{\sum\alpha_i^2}{N}} \text{ (as }|W|=N)\\
\nonumber&=\frac{||{\bf \Phi}_W^T{\bf r}^k||_2}{\sqrt{N}}\\
\nonumber&\stackrel{(11)}{=}\frac{1}{\sqrt{N}}||{\bf \Phi}_W^T{\bf \Phi}_{T\cup\Lambda^k}{\bf x}''_{T\cup\Lambda^k}||_2\\
\nonumber&\stackrel{L1d}{=}\frac{1}{\sqrt{N}}\delta_{N+Nk+K-l}||{\bf x}''_{T\cup\Lambda^k}||_2\\
&\stackrel{L1a}{<}\frac{1}{\sqrt{N}}\delta_{NK+1}||{\bf x}''_{T\cup\Lambda^k}||_2
\end{align}
where (13) comes from the fact that $l\geq k$ and $k\leq K-1$.

Now for finding lower bound of $\beta^k_1$ in terms of $||{\bf x}''_{T\cup\Lambda^k}||_2$ we proceed in this way.
\begin{align}
\nonumber&\beta^k_1=||{\bf \Phi}_T^T {\bf r}^k||_\infty \\
\nonumber& >\frac{1}{\sqrt{K}}||{\bf \Phi}^T_T {\bf r}^k||_2 \hspace{4mm}    (as |T|=K) \\
& =\frac{1}{\sqrt{K}}||\begin {bmatrix}{\bf \Phi}_T & {\bf \Phi}_{\Lambda^k-T}\end{bmatrix}^T {\bf r}^k||_2  \\
\nonumber&=\frac{1}{\sqrt{K}}||{\bf \Phi}^T_{T\cup\Lambda^k}{\bf \Phi}_{T\cup\Lambda^k}{\bf x}''_{T\cup\Lambda^k}||_2 \\
\nonumber&\stackrel{L1b}{>}\frac{1}{\sqrt{K}}(1-\delta_{Nk+K-l})||{\bf x}''_{T\cup\Lambda^k}||_2 \\
&\stackrel{L1a}{>}\frac{1}{\sqrt{K}}(1-\delta_{NK})||{\bf x}''_{T\cup\Lambda^k}||_2
\end{align}
where (14) comes as ${\bf \Phi}^T_{\Lambda^k-T} {\bf r}^k=0$ which follows from (3). Now from (13) and (15) ensuring  $\beta^k_1>\alpha^k_N$ gives us the result.
\end{proof}

\section{Analysis in presence of noise}
In case of noise we can model the measurement as ${\bf y'}={\bf y}+{\bf n}={\bf \Phi} {\bf x}+{\bf n}$ where ${\bf n}$ is the added noise. We can show the performance of this algorithm in presence of noise in two ways. One is by finding the upper bound of reconstruction error energy $||{\bf x}-{\bf x}_{\Lambda^k}||_2$ (as presented in \cite{wang_2012b})and other is by providing a condition for exact reconstruction subject to upper bound on measurement SNR=$\frac{||{\bf y}||_2}{||{\bf n}||_2}$.

\begin{theorem}
 If $k=K$ forms the stopping criterion in gOMP with ${\bf \Phi}$ satisfying $\delta_{NK+K}<1$ and $\delta_{NK}<\frac {\sqrt{N}}{\sqrt{K}+2\sqrt{N}}$ then $||{\bf x}-{\bf x}_{\Lambda^k}||_2<C_{K_1}||{\bf n}||_2$ where 

$C_{K_1}=\frac{(1-\delta_{NK})(\sqrt{N}(1+\delta_K)+\sqrt{K(1+\delta_N)(1+\delta_K)})}{(\sqrt{N}-(\sqrt{K}+2\sqrt{N})\delta_{NK})\sqrt{1-\delta_{NK+K}}} + \frac{2}{\sqrt{1-\delta_{NK+K}}}$

\end{theorem}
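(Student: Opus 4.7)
The plan is to adapt the analysis of Theorem 1 to the noisy setting and then convert a bound on the missed-support energy into a bound on the full reconstruction error. Write ${\bf y}' = {\bf \Phi x} + {\bf n}$, so that at iteration $k+1$ the residual becomes ${\bf r}^k = {\bf P}_{\Lambda^k}^\perp {\bf \Phi x} + {\bf P}_{\Lambda^k}^\perp {\bf n}$. I will split the analysis into two cases: either gOMP selects at least one correct index in every iteration (so $T \subseteq \Lambda^K$), or there is a first iteration $k^*\le K$ at which the new indices are all incorrect. The two cases are unified by observing that $\Lambda^k$ only grows, so $\|{\bf x}_{T-\Lambda^K}\|_2 \le \|{\bf x}_{T-\Lambda^{k^*-1}}\|_2$, and in the success case this quantity vanishes.

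For the failure case, I will reproduce the chain in (4)--(9) from the proof of Theorem~1 with the noise contribution carried through. The lower bound on $\beta_1^k$ picks up an extra $-\frac{1}{\sqrt{K}}\|{\bf \Phi}_T^T {\bf P}_{\Lambda^k}^\perp {\bf n}\|_2$, which by Lemma~1c is at least $-\frac{\sqrt{1+\delta_K}}{\sqrt{K}}\|{\bf n}\|_2$. Similarly, mirroring Lemma~2 but retaining the noise term, $\alpha_N^k$ acquires an extra $+\frac{\sqrt{1+\delta_N}}{\sqrt{N}}\|{\bf n}\|_2$. Imposing the failure condition $\beta_1^{k^*-1}\le \alpha_N^{k^*-1}$ and solving for $\|{\bf x}_{T-\Lambda^{k^*-1}}\|_2$ yields
\begin{equation*}
\|{\bf x}_{T-\Lambda^K}\|_2 \le \frac{(1-\delta_{NK})\bigl[\sqrt{N}\sqrt{1+\delta_K} + \sqrt{K}\sqrt{1+\delta_N}\bigr]}{\sqrt{N}-(2\sqrt{N}+\sqrt{K})\delta_{NK}}\|{\bf n}\|_2,
\end{equation*}
where positivity of the denominator is guaranteed exactly by the hypothesis $\delta_{NK} < \sqrt{N}/(\sqrt{K}+2\sqrt{N})$.

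For the final step, I will translate the missed-support bound into the reconstruction error. Using ${\bf r}^K = {\bf \Phi x} + {\bf n} - {\bf \Phi}_{\Lambda^K}{\bf x}_{\Lambda^K}$ together with $\|{\bf r}^K\|_2 \le \|{\bf P}_{\Lambda^K}^\perp {\bf \Phi}_{T-\Lambda^K}{\bf x}_{T-\Lambda^K}\|_2 + \|{\bf n}\|_2 \le \sqrt{1+\delta_K}\|{\bf x}_{T-\Lambda^K}\|_2 + \|{\bf n}\|_2$ (the inner inequality uses that ${\bf P}^\perp$ is a contraction and RIP of order $K$), I get $\|{\bf \Phi}({\bf x}-{\bf x}_{\Lambda^K})\|_2 \le \sqrt{1+\delta_K}\|{\bf x}_{T-\Lambda^K}\|_2 + 2\|{\bf n}\|_2$. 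Since ${\bf x}-{\bf x}_{\Lambda^K}$ is supported on $T\cup \Lambda^K$ with $|T\cup\Lambda^K|\le NK+K$, the lower RIP bound of order $NK+K$ gives $\|{\bf x}-{\bf x}_{\Lambda^K}\|_2 \le (\sqrt{1+\delta_K}\|{\bf x}_{T-\Lambda^K}\|_2 + 2\|{\bf n}\|_2)/\sqrt{1-\delta_{NK+K}}$. Plugging in the missed-support bound, the first summand contributes the first term of $C_{K_1}$ (multiplying by $\sqrt{1+\delta_K}$ promotes $\sqrt{N}\sqrt{1+\delta_K}$ to $\sqrt{N}(1+\delta_K)$ and $\sqrt{K}\sqrt{1+\delta_N}$ to $\sqrt{K(1+\delta_N)(1+\delta_K)}$), while the $2\|{\bf n}\|_2$ contributes the second term.

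The main obstacle is bookkeeping rather than ideas: I need to be careful when inserting the noise term into the bounds (4) and (7) so that the constant coming out of the failure analysis has the right factor $(1-\delta_{NK})$ in the numerator and the exact margin $\sqrt{N}-(\sqrt{K}+2\sqrt{N})\delta_{NK}$ in the denominator, and I must apply Lemma~1c with the correct index set ($T$ vs.\ $W$ with $|W|=N$) to obtain the $\sqrt{1+\delta_K}$ and $\sqrt{1+\delta_N}$ weights separately. The success case requires no separate argument since the failure-case bound is trivially satisfied (left side is zero); that the analysis of Theorem~1 is essentially self-dual under adding ${\bf P}_{\Lambda^k}^\perp {\bf n}$ to the residual is what makes the argument clean.
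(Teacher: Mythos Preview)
Your proposal is correct and follows essentially the same route as the paper: perturb the noiseless bounds on $\beta_1^k$ and $\alpha_N^k$ by $\pm\frac{\sqrt{1+\delta_K}}{\sqrt{K}}\|{\bf n}\|_2$ and $\pm\frac{\sqrt{1+\delta_N}}{\sqrt{N}}\|{\bf n}\|_2$, invoke the failure condition at the first bad iteration to bound $\|{\bf x}_{T-\Lambda^{k^*-1}}\|_2$, and then pass to $\|{\bf x}-{\bf x}_{\Lambda^K}\|_2$ via the RIP of order $NK+K$. The only difference is in how you link the final iteration back to the failure iteration: the paper uses the monotone decrease of the residual, $\|{\bf r}^K\|_2\le\|{\bf r}^{k^*-1}\|_2$, and then bounds $\|{\bf r}^{k^*-1}\|_2$ in terms of $\|{\bf x}_{T-\Lambda^{k^*-1}}\|_2$; you instead bound $\|{\bf r}^K\|_2$ directly in terms of $\|{\bf x}_{T-\Lambda^K}\|_2$ and use the monotone growth of the support, $\|{\bf x}_{T-\Lambda^K}\|_2\le\|{\bf x}_{T-\Lambda^{k^*-1}}\|_2$. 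Both are valid and yield the identical constant $C_{K_1}$.
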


\textit{Proof:} In Appendix B

\begin{theorem}
 If $k=K$ forms the stopping criterion in gOMP with ${\bf \Phi}$ satisfying $\delta_{NK+K}<1$ and $\delta_{NK+1}<\frac {\sqrt{N}}{\sqrt{K}+\sqrt{N}}$ then $||{\bf x}-{\bf x}_{\Lambda^k}||_2<C_{K_2}||{\bf n}||_2$ where 

$C_{K_2}=\frac{(\sqrt{N}(1+\delta_K)+\sqrt{K(1+\delta_N)(1+\delta_K)})}{(\sqrt{N}-(\sqrt{K}+\sqrt{N})\delta_{NK+1})\sqrt{1-\delta_{NK+K}}} + \frac{2}{\sqrt{1-\delta_{NK+K}}}$

\end{theorem}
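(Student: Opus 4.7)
The plan is to parallel the noise-free analysis of Theorem~2 while tracking the additional terms due to ${\bf n}$, and then to combine the per-iteration selection analysis with a final RIP-based least-squares bound. First I would redefine the correlations $\beta^k_i$ and $\alpha^k_j$ with respect to the noisy residual ${\bf r}'^k={\bf P}^\perp_{\Lambda^k}({\bf \Phi x}+{\bf n})$, splitting each into the noiseless residual studied in Theorem~2 plus the projected noise ${\bf P}^\perp_{\Lambda^k}{\bf n}$. The arguments leading to~(13) and~(15) extend directly, with noise contributions controlled via Lemma~1(c) applied to ${\bf \Phi}_T$ and to the set $W$ of $N$ competing incorrect atoms, yielding
\begin{align}
\beta^k_1 &> \tfrac{1-\delta_{NK}}{\sqrt{K}}\,||{\bf x}''_{T\cup\Lambda^k}||_2-\sqrt{\tfrac{1+\delta_K}{K}}\,||{\bf n}||_2, \nonumber \\
\alpha^k_N &< \tfrac{\delta_{NK+1}}{\sqrt{N}}\,||{\bf x}''_{T\cup\Lambda^k}||_2+\sqrt{\tfrac{1+\delta_N}{N}}\,||{\bf n}||_2. \nonumber
\end{align}
Requiring $\beta^k_1>\alpha^k_N$ and invoking $\delta_{NK}\le\delta_{NK+1}$ produces the threshold
\begin{equation}
\tfrac{\sqrt{N}-(\sqrt{K}+\sqrt{N})\delta_{NK+1}}{\sqrt{NK}}\,||{\bf x}''_{T\cup\Lambda^k}||_2 > \Big(\sqrt{\tfrac{1+\delta_K}{K}}+\sqrt{\tfrac{1+\delta_N}{N}}\Big)||{\bf n}||_2, \nonumber
\end{equation}
whose left-hand coefficient is strictly positive under the hypothesis $\delta_{NK+1}<\frac{\sqrt{N}}{\sqrt{K}+\sqrt{N}}$.

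If the threshold holds at every one of the $K$ iterations then at least one correct atom is added per iteration, forcing $T\subseteq\Lambda^K$. Otherwise, at the first iteration $k\le K-1$ where it fails, the inequality reverses to give
\begin{equation}
||{\bf x}''_{T\cup\Lambda^k}||_2 \le \tfrac{\sqrt{N(1+\delta_K)}+\sqrt{K(1+\delta_N)}}{\sqrt{N}-(\sqrt{K}+\sqrt{N})\delta_{NK+1}}\,||{\bf n}||_2. \nonumber
\end{equation}
Since $\Lambda^k\subseteq\Lambda^K$, we then have $||{\bf x}_{T-\Lambda^K}||_2\le||{\bf x}_{T-\Lambda^k}||_2\le||{\bf x}''_{T\cup\Lambda^k}||_2$, a uniform bound that degenerates to $0$ when $T\subseteq\Lambda^K$. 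The terminal residual obeys $||{\bf r}'^K||_2\le\sqrt{1+\delta_K}\,||{\bf x}_{T-\Lambda^K}||_2+||{\bf n}||_2$, by bounding the signal and noise parts of ${\bf P}^\perp_{\Lambda^K}{\bf y}'$ separately, while a RIP inversion on ${\bf \Phi}({\bf x}-{\bf x}_{\Lambda^K})={\bf r}'^K-{\bf n}$ using $|T\cup\Lambda^K|\le NK+K$ yields
\begin{equation}
||{\bf x}-{\bf x}_{\Lambda^K}||_2 \le \frac{||{\bf r}'^K||_2+||{\bf n}||_2}{\sqrt{1-\delta_{NK+K}}}. \nonumber
\end{equation}
Chaining these inequalities and simplifying via the algebraic identity $\sqrt{1+\delta_K}\,(\sqrt{N(1+\delta_K)}+\sqrt{K(1+\delta_N)})=\sqrt{N}(1+\delta_K)+\sqrt{K(1+\delta_N)(1+\delta_K)}$ reproduces exactly $C_{K_2}$: the summand $\frac{2}{\sqrt{1-\delta_{NK+K}}}$ collects the two explicit $||{\bf n}||_2$ contributions, and the other summand arises from the $\sqrt{1+\delta_K}\,||{\bf x}_{T-\Lambda^K}||_2$ contribution.

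The principal obstacle is the careful bookkeeping through these three chained inequalities so that the constants compose exactly into the stated form; in particular the cancellation of the $\sqrt{NK}$ factor between the threshold and the per-iteration bound, and the emergence of the denominator $\sqrt{N}-(\sqrt{K}+\sqrt{N})\delta_{NK+1}$ exactly matching the Theorem~2 margin, must be tracked explicitly. Structurally, the proof mirrors the argument for Theorem~3 in Appendix~B with $\delta_{NK+1}$ replacing $\delta_{NK}$ and $(\sqrt{K}+\sqrt{N})$ replacing $(\sqrt{K}+2\sqrt{N})$, reflecting the tighter noiseless selection bound of Theorem~2 carried through into the noisy analysis.
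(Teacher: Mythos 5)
Your proposal is correct and follows essentially the same route as the paper's own proof (Appendix C): the same noisy per-iteration bounds on $\alpha^k_N$ and $\beta^k_1$ in terms of $||{\bf x}''_{T\cup\Lambda^k}||_2$, the same first-failure inequality bounding $||{\bf x}''_{T\cup\Lambda^p}||_2$ by a multiple of $||{\bf n}||_2$, and the same final RIP-inversion chain through the terminal residual using $\delta_{NK+K}<1$. The only cosmetic difference is that you bound $||{\bf r}'^K||_2$ directly via $\sqrt{1+\delta_K}\,||{\bf x}_{T-\Lambda^K}||_2+||{\bf n}||_2$ and pass to the failure iteration through the set inclusion $\Lambda^k\subseteq\Lambda^K$, whereas the paper invokes monotonicity of residual norms ($||{\bf r}^K||_2\leq||{\bf r}^p||_2$) and expands ${\bf r}^p$; both yield the identical constant $C_{K_2}$.
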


\textit{Proof:} In Appendix C

The above bounds provide a estimate on upper bound on reconstruction energy. But they do not guarantee estimation of correct support set. In some cases it may happen that the reconstruction error energy is bounded but the chosen support set is completely different. This may prove expensive because in most cases knowledge of correct support set is more important than knowledge of exact value at that position. So easily verifiable bounds guaranteeing reconstruction of correct support set are necessary.

In communication we often judge the performance by the SNR of the received signal. Before applying sparse reconstruction we do not have the information of energy of vector {\bf x}. But we do have knowledge of energy of clean measurement from transmitter's end. Hence by calculating the SNR at the receiver's end we can have an idea whether a particular algorithm can be implemented for reconstruction or not. This can be a good measure of performance analysis for reconstruction algorithms. So we present a bound on $\frac{||y||_2}{||n||_2}$ for which correct support set is estimated. 
Before stating the theorem let us analyse the assumption made: $|{\bf x}_i|>\frac{|{\bf x}_j|}{\gamma} \forall i,j\in T$. This implies that all non zero values of {\bf x} are bounded within some ratio of the maximum. We see that the sparse systems are modelled by setting the values of elements in {\bf x} below some threshold as zero. Hence this assumption can always be made. If suppose {\bf x} has a non-zero value below $\frac{|x_{max}|}{\gamma}$ then it can be modelled as a $K-1$ sparse system by setting that value to zero without affecting the output much.

\begin{theorem}
 If measurement ${\bf y}={\bf \Phi} {\bf x}$ is corrupted with noise ${\bf n}$ then gOMP algorithm can still recover the true support of {\bf x} provided $\frac{||{\bf y}||_2}{||{\bf n}||_2}>C_{K_3}$ where $$C_{K_3}= \frac{\sqrt{K(1+\delta_N)}+\sqrt{N(1+\delta_K)}}
{\frac{\sqrt{N}(1-\delta_K)(1-2\delta_{NK})}{\gamma\sqrt{(1+\delta_K)}(1-\delta_{NK})^2}-\sqrt{K(1+\delta_N)}}$$
\end{theorem}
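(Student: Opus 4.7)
The plan is to retain the per-iteration $\beta^k_1 > \alpha^k_N$ selection condition from Theorem~1 but use noisy versions of these two quantities, and then convert the resulting constraint on the tail energy $\|\mathbf{x}_{T-\Lambda^k}\|_2$ into one on the measurement SNR using the dynamic-range hypothesis. Because gOMP observes $\mathbf{y}+\mathbf{n}$, the residual splits as $\mathbf{r}^k = \mathbf{P}^\perp_{\Lambda^k}\mathbf{\Phi}\mathbf{x} + \mathbf{P}^\perp_{\Lambda^k}\mathbf{n} =: \tilde{\mathbf{r}}^k + \mathbf{P}^\perp_{\Lambda^k}\mathbf{n}$, so the noiseless analysis of Theorem~1 and Lemma~2 applies verbatim to the first summand and the second is treated as a perturbation.

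Running equations (3)--(9) on $\tilde{\mathbf{r}}^k$ produces the clean lower bound $\tilde\beta^k_1 \ge \frac{1-2\delta_{NK}}{\sqrt{K}(1-\delta_{NK})}\|\mathbf{x}_{T-\Lambda^k}\|_2$. Applying the reverse triangle inequality $\beta^k_1 \ge \tilde\beta^k_1 - \|\mathbf{\Phi}_T^T \mathbf{P}^\perp_{\Lambda^k}\mathbf{n}\|_\infty$, bounding $\|\mathbf{\Phi}_T^T \mathbf{P}^\perp_{\Lambda^k}\mathbf{n}\|_\infty \le \|\mathbf{\Phi}_T^T \mathbf{P}^\perp_{\Lambda^k}\mathbf{n}\|_2 \le \sqrt{1+\delta_K}\|\mathbf{n}\|_2$ using Lemma~1(c) and $\|\mathbf{P}^\perp_{\Lambda^k}\mathbf{n}\|_2 \le \|\mathbf{n}\|_2$, then absorbs the noise contribution into a subtracted $\frac{\sqrt{1+\delta_K}}{\sqrt{K}}\|\mathbf{n}\|_2$ term. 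The same idea applied inside the proof of Lemma~2 yields $\alpha^k_N \le \frac{\delta_{NK}}{\sqrt{N}(1-\delta_{NK})}\|\mathbf{x}_{T-\Lambda^k}\|_2 + \frac{\sqrt{1+\delta_N}}{\sqrt{N}}\|\mathbf{n}\|_2$. To pass from $\|\mathbf{x}_{T-\Lambda^k}\|_2$ to $\|\mathbf{y}\|_2$, I use that $T-\Lambda^k\ne\emptyset$ until termination, so $\|\mathbf{x}_{T-\Lambda^k}\|_2 \ge |x_{\min}|$; the assumption $|x_i|>|x_j|/\gamma$ gives $|x_{\min}| > |x_{\max}|/\gamma$; and $\|\mathbf{y}\|_2 \le \sqrt{K(1+\delta_K)}\,|x_{\max}|$ from RIP; together these yield $\|\mathbf{x}_{T-\Lambda^k}\|_2 > \|\mathbf{y}\|_2 / (\gamma\sqrt{K(1+\delta_K)})$.

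Imposing $\beta^k_1 - \alpha^k_N > 0$, collecting the coefficients of $\|\mathbf{x}_{T-\Lambda^k}\|_2$ and of $\|\mathbf{n}\|_2$, and substituting the above inequality produces a sufficient condition of the form $\|\mathbf{y}\|_2/\|\mathbf{n}\|_2 > C_{K_3}$. The main obstacle will be reproducing the precise form of $C_{K_3}$: the factor $(1-\delta_K)(1-2\delta_{NK})/(1-\delta_{NK})^2$ in its denominator indicates that the clean lower bound on $\beta^k_1$ must be sharpened using the \emph{second} inequality of Lemma~3, which controls $\|\mathbf{A}_{I_1}\mathbf{u}\|_2^2$ by $\|\mathbf{\Phi}\mathbf{u}\|_2^2$ rather than $\|\mathbf{u}\|_2^2$, combined with a Cauchy--Schwarz step $\|\mathbf{\Phi}_{T-\Lambda^k}^T\mathbf{r}^k\|_2 \ge \|\tilde{\mathbf{r}}^k\|_2^2/\|\mathbf{x}_{T-\Lambda^k}\|_2$; these together introduce the extra $(1-\delta_K)$ factor and square the $(1-\delta_{NK})$ factor. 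Carefully propagating the noise cross-term $\langle\tilde{\mathbf{r}}^k,\mathbf{P}^\perp_{\Lambda^k}\mathbf{n}\rangle$ through this sharper path while still ending with the clean numerator $\sqrt{N(1+\delta_K)}+\sqrt{K(1+\delta_N)}$ is the most delicate bookkeeping of the proof.
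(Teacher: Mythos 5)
Your overall architecture matches the paper's: compare noisy per-iteration bounds $\beta^k_1 > \alpha^k_N$, sharpen the $\beta$ side, and use the dynamic-range hypothesis to convert tail energy into $\|\mathbf{y}\|_2$. You also correctly diagnosed the $\beta$-side sharpening: the paper's chain (its inequality (17)) is exactly what you anticipate, namely the Cauchy--Schwarz step from (7) giving $\|\mathbf{\Phi}_{T-\Lambda^k}^T\mathbf{P}^\perp_{\Lambda^k}\mathbf{\Phi}_{T-\Lambda^k}\mathbf{x}_{T-\Lambda^k}\|_2 \ge \|\mathbf{A}_{\Lambda^k}\mathbf{x}'\|_2^2/\|\mathbf{x}_{T-\Lambda^k}\|_2$, then the second inequality of Lemma 3 (note $1-(\delta/(1-\delta))^2=(1-2\delta)/(1-\delta)^2$) and one application of the RIP definition, which together produce the factor $(1-\delta_K)(1-2\delta_{NK})/(1-\delta_{NK})^2$. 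Your worry about the cross term $\langle\tilde{\mathbf{r}}^k,\mathbf{P}^\perp_{\Lambda^k}\mathbf{n}\rangle$ is unfounded: the paper applies the triangle inequality \emph{before} the Cauchy--Schwarz step, so the noise enters only as the subtracted term $\frac{\sqrt{1+\delta_K}}{\sqrt{K}}\|\mathbf{n}\|_2$, exactly as in your perturbation bound.

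However, two concrete choices in your plan prevent you from reaching the stated $C_{K_3}$. First, the $\alpha$ side: the paper does \emph{not} use the Lemma 2 bound in this theorem. It uses the crude chain $\alpha^k_N < \frac{1}{\sqrt{N}}\|\mathbf{\Phi}_W^T\mathbf{P}^\perp_{\Lambda^k}\mathbf{y}'\|_2 \le \frac{\sqrt{1+\delta_N}}{\sqrt{N}}\|\mathbf{y}'\|_2 \le \frac{\sqrt{1+\delta_N}}{\sqrt{N}}(\|\mathbf{y}\|_2+\|\mathbf{n}\|_2)$, bounding the signal part of $\alpha^k_N$ directly by $\|\mathbf{y}\|_2$; after clearing $\sqrt{NK}$, this is precisely what produces the subtracted $\sqrt{K(1+\delta_N)}$ in the denominator of $C_{K_3}$ (the $\|\mathbf{y}\|_2$-coefficient from the $\alpha$ side moves across the inequality). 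Your Lemma-2-based bound is valid and yields a legitimate sufficient SNR condition, but with a structurally different constant whose denominator is proportional to $\sqrt{N}(1-2\delta_{NK})-\sqrt{K}\,\delta_{NK}$, so you would be proving a different statement. Second, the counting: the paper normalizes $\beta^k_1 \ge \frac{1}{\sqrt{K-l}}\|\mathbf{\Phi}_{T-\Lambda^k}^T(\cdot)\|_2$ with $|T-\Lambda^k|=K-l$, and its Lemma 4 retains the matching cardinality factor via $\|\mathbf{x}_{T-\Lambda^k}\|_2 \ge \sqrt{K-l}\,|x_{\min}|$, so the $\sqrt{K-l}$ factors cancel and the coefficient of $\|\mathbf{y}\|_2$ becomes $\frac{(1-2\delta_{NK})(1-\delta_K)}{(1-\delta_{NK})^2\gamma\sqrt{K(1+\delta_K)}}$. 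Your plan combines the $1/\sqrt{K}$ normalization with the one-entry bound $\|\mathbf{x}_{T-\Lambda^k}\|_2 \ge |x_{\min}|$, forfeiting this cancellation and losing a factor of up to $\sqrt{K}$, which lands on a strictly weaker constant. Swapping in the crude $\alpha$ bound and carrying the $K-l$ cardinality through both $\beta$ and the dynamic-range lemma recovers the paper's $C_{K_3}$ exactly.
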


\begin{proof} At first let us make use of the assumption and provide a result which would be used in subsequent proof

\begin{lemma}
 With usual notations we see that
 $$||{\bf \Phi}_{T-\Lambda^k}{\bf x}_{T-\Lambda^k}||_2>\sqrt{\frac{{(K-l)(1-\delta_K)}}{{K(1+\delta_K)}}}\frac{||{\bf \Phi}_Tx_T||_2}{\gamma}$$
\end{lemma}
\textit{proof:} In Appendix D
 
 Let us again compute the bounds on $\alpha^k_N$ and $\beta^k_1$
\begin{align}
\nonumber&\alpha^k_N<\frac{||{\bf \Phi}_W^T {\bf P}_{\Lambda^k}^\perp {\bf y'}||_2}{\sqrt{N}}\\
\nonumber&\stackrel{L1c}{<} \frac{\sqrt{1+\delta_N}||{\bf P}_{\Lambda^k}^\perp {\bf y'}||_2}{\sqrt{N}}\\
\nonumber&< \frac{\sqrt{1+\delta_N}||{\bf y'}||_2}{\sqrt{N}}\\
&< \frac{\sqrt{1+\delta_N}(||{\bf \Phi}_T{\bf x_T}||_2+||{\bf n}||_2)}{\sqrt{N}}
\end{align}

\begin{align}
\nonumber&\beta^k_1>\frac{1}{\sqrt{K-l}}||{\bf \Phi}^T_{T-\Lambda^k} {\bf P}_{\Lambda^k}^\perp ({\bf \Phi}_{T-\Lambda^k} {\bf x}_{T-\Lambda^k}+{\bf n})||_2\\
\nonumber&\stackrel{L1c}{>}\frac{1}{\sqrt{K-l}}||{\bf \Phi}^T_{T-\Lambda^k} {\bf P}_{\Lambda^k}^\perp {\bf \Phi}_{T-\Lambda^k} {\bf x}_{T-\Lambda^k}||_2\\
\nonumber&-\frac{\sqrt{1+\delta_K}}{\sqrt{K}}||{\bf n}||_2\\
\nonumber&\stackrel{(7)}{>}\frac{||{\bf A}_{\Lambda^k} {\bf x}'||_2^2}{\sqrt{K-l}||{\bf x}_{T-\Lambda^k}||_2}-\frac{\sqrt{1+\delta_K}}{\sqrt{K}}||{\bf n }||_2\\
\nonumber&\stackrel{L3}{>}\frac{(1-(\frac{\delta_{NK}}{1-\delta_{NK}})^2)||{\bf \Phi}_{T-\Lambda^k}{\bf x}_{T-\Lambda^k}||^2_2}{\sqrt{K-l}||{\bf x}_{T-\Lambda^k}||_2}\\
\nonumber&-\frac{\sqrt{1+\delta_K}}{\sqrt{K}}||{\bf n}||_2\\
\nonumber&\stackrel{D1}{>}\frac{(1-2\delta_{NK})\sqrt{1-\delta_K}||{\bf \Phi}_{T-\Lambda^k}{\bf x}_{T-\Lambda^k}||_2}{\sqrt{K-l}(1-\delta_{NK})^2}\\
\nonumber&-\frac{\sqrt{1+\delta_K}}{\sqrt{K}}||{\bf n}||_2\\
&\stackrel{L4}{>}\frac{(1-2\delta_{NK})(1-\delta_K)||{\bf \Phi}_T{\bf x}_T||_2}{(1-\delta_{NK})^2\gamma\sqrt{K(1+\delta_K)}}-\frac{\sqrt{1+\delta_K}}{\sqrt{K}}||{\bf n}||_2
\end{align}

Making $\alpha^k_N<\beta^k_1$ ( (16) $<$ (17) ) for correct choice of index we get the desired SNR bound.
\end{proof}

\section{Discussion}The proposed bound  in Theorem 1 is better than the one from \cite{wang_2012b} because while obtaining lower bound of $\beta^k_1$ instead of applying successive inequalities we use a more direct inequality presented in Lemma 3 which leads us to a higher lower bound. This bound is also better than the bound proposed in \cite{Liu} ($\delta_{NK}<\frac{\sqrt{K}}{(2+\sqrt{2})\sqrt{K}}$) for $N<1.45K$. But according to \cite{wang_2012b} gOMP performs better than OMP for small values of N only.

 It is difficult to compare the bounds presented in Theorem 1 and Theorem 2 since $\delta_{NK}<\delta_{NK+1}$ and $\frac {\sqrt{N}}{\sqrt{K}+\sqrt{N}}>\frac {\sqrt{N}}{\sqrt{K}+2\sqrt{N}}$. But intuitively we can see that bound on $\delta_{NK+1}$ is more optimal since it reduces to near optimal bound on OMP for special case of $N=1$. The proposition on SNR seems to be a good approach since it is an easily measurable quantity and can be used in future research for comparing greedy algorithm's performance under noise.

 \section{Conclusions}
In this paper, we have given an elegent proof of the theoretical performance of  gOMP algorithm. Our analysis  improves the bound on 
RIP  of order $NK$ from $\delta_{NK} < \frac {\sqrt{N}}{\sqrt{K}+3\sqrt{N}}$ \cite{wang_2012b} to $\delta_{NK} < \frac {\sqrt{N}}{\sqrt{K}+2\sqrt{N}}$.
In the same paper, we have presented another bound of order $NK+1$ with RIP constant
 $\delta_{NK+1} < \frac {\sqrt{N}}{\sqrt{K}+\sqrt{N}}$. We have also presented improved theoretical performance of gOMP algorithm under noisy measurements.

\appendices

\renewcommand{\theequation}{A.\arabic{equation}}
\numberwithin{equation}{section}
\section{Proof of lemma 3}
We know that
\begin{align}\nonumber&||{\bf \Phi} {\bf u}||_2^2=||{\bf P}_{I_1}{\bf \Phi} {\bf u}||_2^2+||{\bf P}_{I_1}^\perp {\bf \Phi} {\bf u}||_2^2 \\
\Rightarrow &||{\bf A}_{I_1}{\bf u}||_2^2=||{\bf \Phi} {\bf u}||_2^2-||{\bf P}_{I_1}{\bf \Phi} {\bf u}||_2^2
\end{align}
Now $\langle {\bf P}_{I_1}{\bf \Phi} {\bf u},{\bf \Phi}{\bf u}\rangle=({\bf \Phi} {\bf u})^T {\bf P}_{I_1}^T {\bf \Phi} {\bf u}=({\bf \Phi} {\bf u})^T {\bf P}_{I_1}^T{\bf P}_{I_1} {\bf \Phi} {\bf u}=||{\bf P}_{I_1}{\bf \Phi} {\bf u}||_2^2$. Further
we see that ${\bf P}_{I_1}{\bf \Phi} {\bf u}\in span({\bf \Phi}_{I_1})$. So ${\bf P}_{I_1}{\bf \Phi} {\bf u}={\bf \Phi} {\bf z}$ for some ${\bf z}\in\mathbb{R}^n$  with $supp({\bf z})\in{I_1}$.

\begin{align}
\nonumber&\frac{||{\bf P}_{I_1}{\bf \Phi} {\bf u}||_2}{||{\bf \Phi}{\bf u}||_2}
=\frac{\langle {\bf P}_{I_1}{\bf \Phi}{\bf u},{\bf \Phi} {\bf u}\rangle}{||{\bf P}_{I_1}{\bf \Phi} {\bf u}||_2||{\bf \Phi}{\bf u}||_2}=\frac{\langle {\bf \Phi} {\bf z},{\bf \Phi} {\bf u}\rangle} {||{\bf \Phi} {\bf z}||_2||{\bf \Phi} {\bf u}||_2}\\
\nonumber&\stackrel{D1, L1d}{<}\frac{\delta_{|I_1|+|I_2|}}{\sqrt{1-\delta_{|I_1|}}\sqrt{1-\delta_{|I_2|}}}\\
&\stackrel{L1a}{<}\frac{\delta_{|I_1|+|I_2|}}{1-\delta_{|I_1|+|I_2|}}
\end{align} 
So from (A.1) and (A.2) we get$ ||{\bf A}_{I_1}{\bf u}||_2^2>(1-(\dfrac{\delta_{|I_1|+|I_2|}}{1-\delta_{|I_1|+|I_2|}})^2)||{\bf \Phi} {\bf u}||_2^2>(1-(\dfrac{\delta_{|I_1|+|I_2|}}{1-\delta_{|I_1|+|I_2|}})^2)(1-\delta_{|I_2|})||{\bf u}||_2^2\stackrel{L1a}{>}(1-\dfrac{\delta_{|I_1|+|I_2|}}{1-\delta_{|I_1|+|I_2|}})||{\bf u}||_2^2$.
Applying $||{\bf P}_{I_1}{\bf \Phi} {\bf u}||_2^2>0$ in (A.1) the upper bound becomes$||{\bf A}_{I_1}{\bf u}||_2^2<||{\bf \Phi} {\bf u}||_2^2<(1+\delta_{|I_2|})||{\bf u}||_2^2\stackrel{L1a}{<}(1+\delta_{|I_1|+|I_2|})||{\bf u}||_2$.

\section{Proof of theorem 3}First we need to find the bounds on $\alpha^k_N$ and $\beta^k_1$ in presence of noise in k+1$^{th}$  iteration
\begin{align}
\nonumber&\alpha^k_N<\frac{||{\bf \Phi}_W^T{\bf r}^k||_2}{\sqrt{N}}\\
\nonumber&=\frac{1}{\sqrt{N}}||{\bf \Phi}_W^T({\bf P}^\perp_{\Lambda^k}{\bf y}+{\bf P}^\perp_{\Lambda^k}{\bf n})||_2\\
\nonumber&<\frac{1}{\sqrt{N}}(||{\bf \Phi}_W^TP^\perp_{\Lambda^k}{\bf y}||_2+||{\bf \Phi}_W^TP^\perp_{\Lambda^k}{\bf n}||_2)\\
\nonumber&\stackrel{L2,L1c}{<}\frac{1}{\sqrt{N}}\frac{\delta_{NK}}{1-\delta_{NK}}||{\bf x}_{T-\Lambda^k}||_2+\frac{\sqrt{1+\delta_N}}{\sqrt{N}}||{\bf P}^\perp_{\Lambda^k}{\bf n}||_2\\
&<\frac{1}{\sqrt{N}}\frac{\delta_{NK}}{1-\delta_{NK}}||{\bf x}_{T-\Lambda^k}||_2+\frac{\sqrt{1+\delta_N}}{\sqrt{N}}||{\bf n}||_2
\end{align}
and
\begin{align}
\nonumber&\beta^k_1>\frac{1}{\sqrt{K}}||{\bf \Phi}^T_{T-\Lambda^k} {\bf P}_{\Lambda^k}^\perp {\bf y'}||_2\\
\nonumber&>\frac{1}{\sqrt{K}}||{\bf \Phi}^T_{T-\Lambda^k} {\bf P}_{\Lambda^k}^\perp {\bf \Phi}_T {\bf x}_T||_2-\frac{1}{\sqrt{K}}||{\bf \Phi}^T_{T-\Lambda^k} {\bf P}_{\Lambda^k}^\perp {\bf n}||_2\\
\nonumber&\stackrel{(9),L1c}{>}\frac{1}{\sqrt{K}}(1-\frac{\delta_{NK}}{1-\delta_{NK}})||{\bf x}_{T-\Lambda^k}||_2-\frac{\sqrt{1+\delta_K}}{\sqrt{K}} ||{\bf P}_{\Lambda^k}^\perp {\bf n}||_2\\
&>\frac{1}{\sqrt{K}}(1-\frac{\delta_{NK}}{1-\delta_{NK}})||{\bf x}_{T-\Lambda^k}||_2-\frac{\sqrt{1+\delta_K}}{\sqrt{K}} ||{\bf n}||_2
\end{align}
Now at the end of algorithm it may happen that some incorrect atoms are chosen. Lets say this happens for the first time in the p+1$^{th}$ step. Then at this particular step (B.1)$>$(B.2). Which implies

\begin{align}
||{\bf x}_{T-\Lambda^p}||_2<\nonumber&\frac{(1-\delta_{NK})(\sqrt{N(1+\delta_K)}+\sqrt{K(1+\delta_N)})}{(\sqrt{N}-(\sqrt{K}+2\sqrt{N})\delta_{NK})} \\ &\times||{\bf n}||_2
\end{align}

The error in reconstruction energy can be seen as 
\begin{align}
\nonumber&||{\bf x}-{\bf x}_{\Lambda^K}||_2\stackrel{D1}{<}\frac{||{\bf \Phi} {\bf x}-{\bf \Phi} {\bf x}_{\Lambda^K}||_2}{\sqrt{1-\delta_{NK+K}}}\\
\nonumber&=\frac{||{\bf \Phi} {\bf x}-{\bf \Phi} {\bf \Phi}^\dag_{\Lambda^K}{\bf y'}||_2}{\sqrt{1-\delta_{NK+K}}}\\
\nonumber&=\frac{||{\bf y'}-{\bf P}_{\Lambda^K}{\bf y'}-{\bf n}||_2}{\sqrt{1-\delta_{NK+K}}}\\
\nonumber&<\frac{||{\bf P}^\perp_{\Lambda^K}{\bf y'}||_2+||{\bf n}||_2}{\sqrt{1-\delta_{NK+K}}}\\
\nonumber&=\frac{||{\bf r}^K||_2+||{\bf n}||_2}{\sqrt{1-\delta_{NK+K}}}\\
\nonumber&\leq\frac{||{\bf r}^p||_2+||{\bf n}||_2}{\sqrt{1-\delta_{NK+K}}}(\text{as }||{\bf r}^i||_2\leq||{\bf r}^j||_2  \text{ for } i>j)\\
\nonumber&=\frac{||{\bf P}^\perp_{\Lambda^p}({\bf \Phi}_T {\bf x}_T+{\bf n})||_2+||{\bf n}||_2}{\sqrt{1-\delta_{NK+K}}}\\
\nonumber&<\frac{||{\bf P}^\perp_{\Lambda^p}{\bf \Phi}_T {\bf x}_T||_2+||{\bf P}^\perp_{\Lambda^p}{\bf n}||_2+||{\bf n}||_2}{\sqrt{1-\delta_{NK+K}}}\\
\nonumber&<\frac{||{\bf P}^\perp_{\Lambda^p}{\bf \Phi}_{T-\Lambda^p} {\bf x}_{T-\Lambda^p}||_2+2||{\bf n}||_2}{\sqrt{1-\delta_{NK+K}}}\\
\nonumber&<\frac{||{\bf \Phi}_{T-\Lambda^p} {\bf x}_{T-\Lambda^p}||_2+2||{\bf n}||_2}{\sqrt{1-\delta_{NK+K}}}\\
&\stackrel{D1}{<}\frac{\sqrt{1+\delta_{K}}||{\bf x}_{T-\Lambda^p}||_2+2||{\bf n}||_2}{\sqrt{1-\delta_{NK+K}}}
\end{align}
By using (B.3) in (B.4) we get the desired bound.

\section{Proof of  theorem 4}In this case we find the bounds on $\alpha^k_N$ and $\beta^k_1$ in presence of noise similar to our proof on second bound of gOMP. Proceding similar to (B.1) and (B.2) we get
\begin{align}
\alpha^k_N<\frac{1}{\sqrt{N}}\delta_{NK+1}||{\bf x}''_{T\cup\Lambda^k}||_2+\frac{\sqrt{1+\delta_N}}{\sqrt{N}}||{\bf n}||_2
\end{align}
\begin{align}
\beta^k_1>\frac{1-\delta_{NK}}{\sqrt{K}}||{\bf x}''_{T\cup\Lambda^k}||_2-\frac{\sqrt{1+\delta_K}}{\sqrt{K}} ||{\bf n}||_2
\end{align}
So failure at p+1$^{th}$ step implies
\begin{equation}
||{\bf x}''_{T\cup\Lambda^p}||_2<\frac{\sqrt{N(1+\delta_K)}+\sqrt{K(1+\delta_N)}}{(\sqrt{N}-(\sqrt{K}+\sqrt{N})\delta_{NK+1})} ||n||_2
\end{equation}
Now to get an upper bound in estimation error we proceed similarly as in (B.4)
\begin{align}
\nonumber&||{\bf x}-{\bf x}_{\Lambda^K}||_2<\frac{\sqrt{1+\delta_{K}}||{\bf x}_{T-\Lambda^p}||_2+2||{\bf n}||_2}{\sqrt{1-\delta_{NK+K}}}\\
\nonumber&\stackrel{(12)}{<}\frac{\sqrt{1+\delta_{K}}||{\bf x}''_{T\cup\Lambda^p}||_2+2||{\bf n}||_2}{\sqrt{1-\delta_{NK+K}}}\\
&\stackrel{(C.3)}{<}C_{K_2}||{\bf n}||_2
\end{align}

\section{Proof of Lemma 4}

\begin{align}
\nonumber&||{\bf \Phi}_{T-\Lambda^k} {\bf x}_{T-\Lambda^k}||_2\stackrel{D1}{>}\sqrt{1-\delta_K}||{\bf x}_{T-\Lambda^k}||_2\\
\nonumber&>\sqrt{(1-\delta_K)(K-l)}|{\bf x}_{min}|\\
\nonumber&>\sqrt{\frac{(1-\delta_K)(K-l)}{K}}\frac{||{\bf x}_T||_2}{\gamma}\\
&\stackrel{D1}{>}\sqrt{\frac{(1-\delta_K)(K-l)}{K(1+\delta_K)}}\frac{||{\bf \Phi}_Tx_T||_2}{\gamma}
\end{align}



\bibliographystyle{IEEEtran}
\bibliography{refs}
\end{document}